\newtheorem{definition}{Definition} \newtheorem{lemma}{Lemma} \newtheorem{theorem}{Theorem} \newtheorem{example}{Example} \newtheorem{corollary}{Corollary} \newtheorem{remark}{Remark}
 \newtheorem{solution}{Solution}
\begin{document}

\begin{frontmatter}

\title{A Near-Optimal Memoryless Online Algorithm for FIFO Buffering Two Packet Classes}

\author{Fei Li}

\address{Department of Computer Science\\
George Mason University\\
Fairfax, Virginia 22030\\
Email: \textsf{lifei@cs.gmu.edu}}


\begin{abstract}
We consider scheduling packets with values in a capacity-bounded buffer in an online setting. In this model, there is a buffer with limited capacity $B$. At any time, the buffer cannot accommodate more than $B$ packets. Packets arrive over time. Each packet is associated with a non-negative value. Packets leave the buffer only because they are either sent or dropped. Those packets that have left the buffer will not be reconsidered for delivery any more. In each time step, at most one packet in the buffer can be sent. The order in which the packets are sent should comply with the order of their arrival time. The objective is to maximize the total value of the packets sent in an online manner. In this paper, we study a variant of this FIFO buffering model in which a packet's value is either $1$ or $\alpha > 1$. We present a deterministic memoryless $1.304$-competitive algorithm. This algorithm has the same competitive ratio as the one presented in (Lotker and Patt-Shamir. PODC 2002, Computer Networks 2003). However, our algorithm is simpler and does not employ any marking bits. The idea used in our algorithm is novel and different from all previous approaches applied for the general model and its variants. We do not proactively preempt one packet when a new packet arrives. Instead, we may preempt more than one $1$-value packet when the buffer contains sufficiently many $\alpha$-value packets.
\end{abstract}

\begin{keyword}
online algorithm \sep competitive analysis \sep buffer management \sep packet scheduling
\end{keyword}

\end{frontmatter}


\section{Introduction}

We consider online algorithms to schedule packets with values in a capacity-bounded buffer. There is a buffer with a limited size $B \in \mathbb Z^+$. At any time, the buffer can accommodate at most $B$ packets. Packets arrive over time. The buffer is preemptive: Packets already in the buffer are allowed to be dropped at any time before they are delivered. We use $r_p \in \mathbb R^+$ and $v_p \in \mathbb R^+$ to denote the \emph{release time} (\emph{arriving time}) and \emph{value} of a packet $p$ respectively. Packets leave the buffer only because they are either sent or dropped. Those sent and dropped packets will not be reconsidered for delivery any more. Time is discrete. In each time step, at most one packet in the buffer can be sent. The order of the packets being sent should comply with the order in which they are released. The objective is to maximize the total value of the packets sent in an online manner. We call this model a \emph{FIFO buffering model}; this model has attracted a lot of attention in the past ten years and has been studied extensively~\cite{KesselmanLMPSS04, LotkerP03, KesselmanMS05, EnglertW09}. In this paper, we study a variant of this model in which packets have value either $1$ or $\alpha > 1$. This variant is called a \emph{two-valued model} and has been investigated in~\cite{KesselmanM03, LotkerP03, EnglertW09}.

Without knowing the future input, an online algorithm has to make decision over time based on the input information that it has seen so far. If an online algorithm decides which packet to send only based on the contents of its current buffer, and independent of the packets that have already been released and processed, we call it \emph{memoryless}. Consider a maximization problem as an example. A deterministic online algorithm is called \emph{$k$-competitive} if its objective value on \emph{any} instance is at least $1 / k$ times of the objective of an optimal offline algorithm applied on the same instance~\cite{BorodinE98}. The \emph{upper bounds} of competitive ratio are achieved by some known online algorithms. A competitive ratio less than the \emph{lower bound} cannot be reached by any online algorithm~\cite{BorodinE98}. For the two-valued model, the previously best known result is a $1.544$-competitive memoryless algorithm~\cite{KesselmanM03}, a $1.304$-competitive algorithm~\cite{LotkerP03} using marking bits to associate with all pending packets in the buffer~\cite{LotkerP03}, and a non-memoryless optimal $1.282$-competitive algorithm~\cite{EnglertW09}. In this paper, we present a $1.304$-competitive memoryless algorithm. Our algorithm is simpler than the one in~\cite{LotkerP03} and it does not use marking bits.

It is instructive to compare and contrast the algorithm in~\cite{LotkerP03} with ours since both have the same competitive ratio $1.304$. Based on the definition of \emph{memoryless algorithms}~\cite{ChrobakJST07, LiSS07} (the algorithm should make the decisions independent of any packets that it has processed), we know that the marking bits used by~\cite{LotkerP03} reflect the packets that the algorithm has processed and affect the marking and flush procedure for later arriving $\alpha$-value packets. Hence the algorithm in~\cite{LotkerP03} is not memoryless.

In Section~\ref{sec:alg}, we describe a deterministic memoryless online algorithm called ON.  In Section~\ref{sec:analysis}, we give the algorithm's analysis, showing that it is $1.304$-competitive. Related work and conclusion remarks are presented in Section~\ref{sec:related}.


\section{Algorithm}
\label{sec:alg}

Without loss of generality, we assume all packets have distinct release time. Consider $m$ packets released in the same time step $t$. We let these $m$ packets have distinct release time of $t$, $t + \delta$, $t + 2 \delta$, $\ldots$, $t + (m - 1) \delta$ respectively, where $\delta > 0$ and $m \cdot \delta \le 1$, in the order of being released.


\subsection{The idea}

The greedy approach might be the first intuitive method to design competitive online algorithms for the FIFO buffering model. It works as follows. If packets overflow, the minimum-value packet is dropped (with ties broken arbitrarily). In each time step, the earliest released packet in the buffer is sent. The greedy algorithm is asymptotically no better than $2$-competitive for the FIFO buffering model, even for the two-valued variant. Based on the observation from the tight instance for the greedy approach, Kesselman et al. in~\cite{KesselmanMS05} came up with another idea, which is to proactively preempt the $1$-value packets in the buffer released before the $\alpha$-value packets. Consider a $1$-value packet $p$ and an $\alpha$-value packet $q$ with $r_p < r_q$. On one hand, if $q$ but not $p$ is the packet sent by the optimal offline algorithm, we would like to preempt $p$ proactively at $q$'s arrival and expect that $q$ can be sent before packet overflow happens. On the other hand, if both $p$ and $q$ are sent by the optimal offline algorithm, we would like to preempt $p$ only if $p$'s value is bounded by a fraction of $q$'s value. This idea leads to the memoryless algorithm PG, which is $1.732$-competitive for the general case~\cite{EnglertW09} and $1.544$-competitive for the two-valued setting~\cite{KesselmanM03}.

In algorithm PG, a packet $p$'s preemption is due to buffering another packet $q$. A new arrival preempts at most one packet that is released earlier. Motivated by the greedy algorithm and PG, we propose the following strategy:
\begin{solution}
We preempt a set of $1$-value packets due to the existence of a set of $\alpha$-value packets in the buffer to make room for the potential $\alpha$-value packets that are released later.
\end{solution}

Different from PG, we take into account the values of multiple packets to preempt a packet. Based on this idea, a $1$-value packet is preempted only when the buffer has buffered sufficiently many of later-released $\alpha$-value packets. In addition, multiple $1$-value packets may be preempted at the arrival of one $\alpha$-value packet.


\subsection{A memoryless online algorithm for the two-valued model}

We name our algorithm ON. ON represents a family of deterministic memoryless online algorithms parameterized by a real number $\beta > 0$. Denote $Q^\text{ALG}_t$ as the buffer of an algorithm ALG at time $t$. Without confusion, we may omit the subscript $t$ in our notation.

\begin{definition}[Ejectable Packet]
Consider two packets $p$ and $q$ in the buffer with $r_p < r_q$, $v_p = 1$ and $v_q = \alpha$. Such a packet $p$ may prevent us from sending $q$ before a future possible packet overflow, and we call $p$ an \emph{ejectable packet}.
\end{definition}

Algorithm ON is outlined as follows. New packets are admitted in a greedy manner. If the earliest-released packet in the buffer is an $\alpha$-value packet, we simply send this packet, same as the greedy policy. Otherwise, if the earliest-released packet is a 1-value packet, we preempt all the ejectable packets, if the total value of all the ejectable packets is bounded by $1 / \beta$ times of the total value of all the $\alpha$-value packets in ON's buffer. Then the earliest-released packet, which is a $1$-value packet if no preemption happens or an $\alpha$-packet if preemption occurs, is sent. In each time step $t$, ON is described in two stages: \emph{admitting packets} (see Algorithm~\ref{alg:admit}) and \emph{(possibly) preempting $1$-value packets and delivering a packet} (see Algorithm~\ref{alg:preempt}).

\begin{algorithm}
\caption{\textsc{Admitting Packets}}
\begin{algorithmic}[1]
\FOR{each new arriving packet}

\IF{there is a buffer slot available}

\STATE append this packet at the end of the packet queue;

\ELSE

\STATE \textbf{\emph{evict}} the minimum-value packet, with ties broken in favor of the earliest-released packet.

\ENDIF

\ENDFOR
\end{algorithmic}
\label{alg:admit}
\end{algorithm}

\begin{algorithm}
\caption{\textsc{Preempting Packets and Delivering a Packet}}
\begin{algorithmic}[1]
\STATE Let the earliest-released packet in the buffer be $e$.

\IF{$v_e = \alpha$}

\STATE send $e$;

\ELSE

\STATE define $D := \{p \ | \ p \in Q^\text{ON}, \ v_p = 1, \ \exists q \mbox{ with } v_q = \alpha \mbox{ and } r_p < r_q\}$;

\IF{$\sum_{q \in Q^\text{ON}; v_q = \alpha} v_q \ge \beta \sum_{p \in D} v_p$}

\STATE \textbf{\emph{preempt}} all the packets in $D$;

\ENDIF

\STATE send the earliest-released packet in the buffer.

\ENDIF
\end{algorithmic}
\label{alg:preempt}
\end{algorithm}

\begin{example}
Let $B = 3$ and $\beta = \alpha$. We use $(r_p, \ v_p)$ to denote a packet $p$ with release time $r_p$ and value $v_p$. Remember that we use fractional release time to differentiate those packets released at the same time step. An input instance is given as follows.

\begin{eqnarray*}
\mbox{step $1$}: & & (1, \ 1), \ (1.1, \ 1), \ (1.2, \ \alpha)\\
\mbox{step $2$}: & & (2, \ \alpha), \ (2.1, \ \alpha), \ (2.2, \ \alpha), \ (2.3, \ 1)\\
\mbox{step $3$}: & & \mbox{no released packets}\\
\mbox{step $4$}: & & \mbox{no released packets}\\
\mbox{step $5$}: & & (5, \ 1), \ (5.1, \ \alpha), \ (5.2, \ \alpha)
\end{eqnarray*}

The optimal offline policy OPT sends the following packets in order:
\begin{displaymath}
(1.2, \ \alpha), \ (2, \ \alpha), \ (2.1, \ \alpha), \ (2.2, \ \alpha), \ (5, \ 1), \ (5.1, \ \alpha), \ (5.2, \ \alpha).
\end{displaymath}

ON sends packet $(1, \ 1)$ in the first time step without preempting the ejectable packets. ON admits all the $\alpha$-value packets released in step $2$ and sends them in steps $2$, $3$, and $4$. In step $5$, ejectable packet $(5, \ 1)$ is preempted. ON sends packets $(5.1, \ \alpha)$ and $(5.2, \ \alpha)$ consecutively in steps $5$ and $6$. Finally, ON has the following sequence of packets being sent.

\begin{displaymath}
(1, \ 1), \ (2, \ \alpha), \ (2.1, \ \alpha), \ (2.2, \ \alpha), \ (5.1, \ \alpha), \ (5.2, \ \alpha).
\end{displaymath}

For the above instance, OPT and ON gain the total values of $6 \alpha + 1$ and $5 \alpha + 1$, respectively.
\end{example}


\section{Analysis}
\label{sec:analysis}

\begin{theorem}
ON is $\max\{\frac{1 + \beta}{\beta}, \ \frac{\alpha^2 + 2 \alpha \cdot \beta}{\alpha^2 + \alpha \cdot \beta + \beta}\}$-competitive, where $\beta > 0$.
\label{theorem:1}
\end{theorem}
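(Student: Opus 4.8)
The plan is to bound $\text{OPT}/\text{ON}$ by a charging/amortized argument that reduces the global comparison to a local, per-event analysis and that isolates exactly two sources of loss for ON. Since ON and OPT see the same arrival sequence and each sends at most one packet per step, I would first record, for the whole execution, the numbers of $\alpha$-value and $1$-value packets each sends, and observe that it suffices to charge the value OPT collects to the value ON collects. The two terms in the claimed ratio will come from two disjoint cases, and the competitive ratio is the worse of the two.

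Before the case analysis I would establish the structural facts that make the charging local. The key invariants are: (i) ON follows FIFO and never idles while its buffer is nonempty; (ii) by the greedy admission rule (Algorithm~\ref{alg:admit}), ON evicts a $1$-value packet before it ever drops an $\alpha$-value packet on overflow, so an $\alpha$-value packet is lost to overflow only when ON's buffer holds $B$ $\alpha$-value packets, a situation that likewise forces OPT to forfeit an $\alpha$-value packet; (iii) after a preemption in Algorithm~\ref{alg:preempt} the set $D$ contains exactly the $1$-value packets blocking some $\alpha$-value packet, so once $D$ is removed the earliest packet is an $\alpha$-value packet and ON proceeds to send $\alpha$-value packets. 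Hence ON's only disadvantage relative to OPT arises either at a preemption step (ON discards the $1$-value packets in $D$) or at a step where ON sends a $1$-value packet while OPT sends an $\alpha$-value packet (the preemption test having failed).

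Case~1, the preemption events, yields the bound $\frac{1+\beta}{\beta}$ directly. When ON preempts $D$ it has $m$ $\alpha$-value packets buffered with $m\alpha \ge \beta\,|D|$, and by FIFO it will send all of them. The most OPT can extract from the packets involved is $m\alpha + |D|$ (every buffered $\alpha$-value packet plus every preempted $1$-value packet), whereas ON secures value at least $m\alpha \ge \beta\,|D|$. Charging OPT's gain to these $\alpha$-value packets gives a local ratio of at most $\frac{m\alpha + |D|}{m\alpha} = 1 + \frac{|D|}{m\alpha} \le 1 + \frac{1}{\beta} = \frac{1+\beta}{\beta}$, and the preemption condition is precisely what keeps this charge bounded.

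Case~2 is the harder part, and is where I expect the main obstacle. Here ON repeatedly sends $1$-value front packets because the test $m\alpha \ge \beta k$ fails (with $k$ the number of ejectable $1$-value packets and $m$ the number of buffered $\alpha$-value packets), so while ON spends steps on $1$-value packets OPT can spend those same steps on $\alpha$-value packets; ON recovers its delayed $\alpha$-value packets later by FIFO but may forfeit some when $\alpha$-value packets keep arriving and either the buffer overflows or the instance ends. The difficulty is twofold: first, ensuring the Case~2 charges do not overlap those of Case~1 and that delayed-then-sent $\alpha$-value packets are not double counted; second, identifying the exact extremal configuration. I would set it up at the boundary of the failed test (so $m\alpha$ is just below $\beta k$), let $B$ grow so that counts may be treated as continuous, and optimize the ratio of OPT's $\alpha$-value output to ON's mixed output subject to the threshold constraint; this optimization should pin the ratio to $\frac{\alpha^2 + 2\alpha\beta}{\alpha^2+\alpha\beta+\beta}$, corresponding to ON emitting roughly $(\alpha+\beta)$ $\alpha$-value packets and $\beta$ $1$-value packets against OPT's $(\alpha+2\beta)$ $\alpha$-value packets. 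Taking the maximum of the two case bounds then completes the proof.
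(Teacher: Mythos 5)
Your Case~1 analysis is sound and essentially coincides with the paper's treatment of preemption intervals: the test $\sum v_q \ge \beta\sum_{p\in D}v_p$ guarantees that the preempting $\alpha$-packets, which ON then provably delivers, dominate the discarded $1$-value mass, giving the local ratio $\frac{1+\beta}{\beta}$. But Case~2 --- which you yourself flag as ``the harder part'' --- is where the actual content of the theorem lies, and your proposal does not supply it. You state that an optimization ``should pin the ratio to $\frac{\alpha^2+2\alpha\beta}{\alpha^2+\alpha\beta+\beta}$'' and reverse-engineer a configuration consistent with that formula, but you never identify the constraint that makes this configuration extremal. The paper's key quantitative step is the following: each time ON sends a $1$-value packet while holding $\alpha$-packets that the optimal schedule has already dispatched, the failed preemption test forces $x\cdot\alpha < (B-x)\beta$ after $x$ such steps, hence $x < \frac{B\beta}{\alpha+\beta}$ (Lemma~\ref{lemma:2}); combined with the fact that an eviction of an $\alpha$-packet only occurs when ON's buffer holds $B$ $\alpha$-packets which it then sends consecutively (Remark~\ref{remark:2}), the worst interval pits $(x+B)\alpha$ for the adversary against $x+B\alpha$ for ON, which evaluates to the second term. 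Without deriving this inequality your ``threshold constraint'' optimization is unconstrained and the bound does not follow.

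A second, concrete error: your invariant (ii) claims that when ON's buffer is full of $B$ $\alpha$-packets and overflows, OPT ``likewise'' forfeits an $\alpha$-packet. This is false and is precisely the situation that costs ON: OPT may have declined or preempted earlier $1$-value packets, so its buffer has room for the $\alpha$-packet that ON must evict. Relatedly, you acknowledge but do not resolve the double-counting issue between the two cases. The paper handles both problems with machinery your sketch lacks --- the relaxed optimum ROPT, the exchange argument showing every $\alpha$-packet ON sends is an $\mathcal O$-packet (Lemma~\ref{lemma:0}), and the chains of steps that locate, for each lost $\mathcal O$-packet, a distinct time step in which ON sends a $1$-value non-$\mathcal O$-packet and absorbs the charge (Lemma~\ref{lemma:1+}, Lemma~\ref{lemma:1++}). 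Some bookkeeping device of this kind is needed before the per-interval ratios can legitimately be taken as the global competitive ratio.
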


We will employ a charging scheme to prove Theorem~\ref{theorem:1}. Let OPT denote an optimal offline algorithm and $\mathcal O$ denote the set of packets sent by OPT.

\begin{lemma}
Any $\alpha$-value packet that ON sends is an $\mathcal O$-packet.
\label{lemma:0}
\end{lemma}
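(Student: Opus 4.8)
The plan is to prove the lemma by an exchange argument: reduce to a convenient optimal offline solution and then establish a containment of the $\alpha$-value packets ON sends inside those sent by OPT.

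First I would record precisely how ON can ever dispose of an $\alpha$-value packet. In Algorithm~\ref{alg:preempt} only $1$-value packets (those in $D$) are ever preempted, so an $\alpha$-value packet is never preempted. In Algorithm~\ref{alg:admit} an overflow evicts a \emph{minimum}-value packet; since $\alpha > 1$, this is a $1$-value packet whenever any $1$-value packet is present. Hence ON discards an $\alpha$-value packet only when its buffer already holds $B$ $\alpha$-value packets and yet another $\alpha$-value packet arrives. In particular, ON never trades an $\alpha$-value packet for a $1$-value packet, and every $\alpha$-value packet that reaches the front of $Q^{\text{ON}}$ (after the possible preemption of $D$) is sent in FIFO order. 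This shows ON's admission of the $\alpha$-value substream is itself greedy.

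Next I would fix a canonical optimal offline algorithm. Using a standard interchange argument justified by $\alpha > 1$, I would assume without loss of generality that $\mathcal O$ (i) never keeps a $1$-value packet at the expense of dropping an $\alpha$-value packet it could otherwise have delivered, and (ii) sends its $\alpha$-value packets in FIFO order and as early as the ordering permits. Such an $\mathcal O$ discards an $\alpha$-value packet only when forced by an $\alpha$-value overflow against the common capacity $B$, exactly as ON does. I would then argue by induction on the time steps that the multiset of $\alpha$-value packets ON has committed to (sent or still buffered) is always contained in the corresponding multiset for $\mathcal O$; since both run a greedy policy on the same $\alpha$-value arrival stream against the same capacity, and greedy admission is throughput-optimal for that substream, no $\alpha$-value packet discarded by $\mathcal O$ is ever sent by ON. The induction then yields that every $\alpha$-value packet ON sends lies in $\mathcal O$.

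The main obstacle is the interaction between the $1$-value packets occupying buffer slots and the FIFO constraint: ON's preemptions in Algorithm~\ref{alg:preempt} let it expose and send $\alpha$-value packets at moments different from $\mathcal O$, so the two buffers need not coincide step by step. The delicate point is the $\alpha$-value overflow case, where I must certify that the $\alpha$-value packet ON is forced to drop is also unsendable by the canonical $\mathcal O$ — equivalently, that ON never sends strictly more $\alpha$-value packets than the optimum. Property~(i) of $\mathcal O$ is precisely what rules out the dangerous scenario in which $\mathcal O$ retains $1$-value packets and is thereby forced to drop an $\alpha$-value packet that ON keeps. Establishing this throughput comparison carefully, rather than the easy ``$\alpha$ beats $1$'' interchange, is where the real work lies.
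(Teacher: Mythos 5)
Your reduction of the problem to the $\alpha$-value substream is a reasonable start, and your observation that ON drops an $\alpha$-value packet only when its buffer already holds $B$ such packets matches Remark~\ref{remark:2}. But the step you yourself flag as ``where the real work lies'' is precisely the step that is missing, and it does not follow from what you have set up. Your inductive invariant --- that the multiset of $\alpha$-value packets ON has committed to is contained in the multiset committed to by the canonical OPT --- cannot be deduced from ``both policies are greedy on the $\alpha$-substream and greedy is throughput-optimal.'' Throughput-optimality controls the \emph{cardinality} of the delivered $\alpha$-set, not its identity: two greedy policies facing the same overflow can evict \emph{different} equal-valued $\alpha$-packets, and since all $\alpha$-packets have the same value, no optimality property of OPT forces its tie-breaking to agree with ON's. (Take $B = 1$ and two $\alpha$-packets released at times $1$ and $1.1$: ON keeps one of them, and an optimal schedule may keep the other; the containment fails at the first step.) Properties (i) and (ii) of your canonical OPT are defined without any reference to ON, so they cannot repair this.

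The paper closes exactly this hole with a direct exchange argument tied to the specific packet $p$ that ON sends: form MOPT, which admits $\mathcal O \cup \{p\}$ and sends earliest-released first; since OPT is optimal, an overflow must occur at some first time $t \ge r_p$, at which point MOPT holds $B + 1$ packets and may discard any one packet $q \neq p$; because $v_q \le \alpha = v_p$, the resulting schedule delivers $\mathcal O \cup \{p\} \setminus \{q\}$ in FIFO order with total value at least that of $\mathcal O$, so there is an optimal offline algorithm whose delivered set contains $p$. That is, the lemma is really a ``choose OPT suitably'' statement, and the choice must be made packet by packet against ON's actual output rather than by canonicalizing OPT in advance. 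To complete your proof you would need to replace the containment invariant with this kind of exchange (or prove a tie-breaking alignment between ON and your canonical OPT, which amounts to the same thing).
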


\begin{proof}
Assume there exists an $\alpha$-value packet $p \notin \mathcal O$ that is sent by ON. Using an exchange argument, we will show that there must exist another optimal offline algorithm that sends $p$.

Consider an algorithm MOPT (Modified OPT) which admits packets $\mathcal O \cup \{p\}$ and sends the earliest-released packet in the buffer in each time step. From step $1$ to step $r_p$, MOPT's buffer content and the packet it sends in each time step are the same as those of OPT. We claim that packet overflow must happen in MOPT's buffer at some time step at/after time $r_p$. Otherwise, MOPT can send all the packets in ${\mathcal O} \cup \{p\}$ successfully and gains more than OPT, which contradicts the fact that OPT is optimal. Assume $t \ge r_p$ is the first time at which packet overflow occurs in MOPT's buffer. Since MOPT only accepts one more packet $p$ in addition to the packets $\mathcal O$, there are ($B + 1$) packets for MOPT to buffer at time $t$. We simply drop one packet $q \neq p$ out of MOPT's buffer at time $t$. Because there is no packet overflow in the time interval $[1, \ t]$ and the number of packets buffered by MOPT (after we drop $q$) is the same as that of OPT's at any time after time $t$, MOPT is capable of sending all the packets ${\mathcal O} \cup \{p\} \setminus \{q\}$ in a FIFO order and gains a total value $\ge \sum_{j \in {\mathcal O}} v_j$.
\end{proof}

The contrapositive of Lemma~\ref{lemma:0} leads to the following corollary.

\begin{corollary}
Any non-$\mathcal O$-packet that ON sends is a $1$-value packet.
\label{coro:1}
\end{corollary}

\begin{remark}
From Algorithm~\ref{alg:preempt}, no $\alpha$-value packets can be preempted. That is, any unsent $\alpha$-value packet must have been only evicted by ON.
\label{remark:1}
\end{remark}

\begin{remark}
Consider a time $t$ in which an $\alpha$-value packet is evicted by ON. This must indicate that the current ON's buffer is full of $B$ packets with value $\alpha$. From Algorithm~\ref{alg:preempt}, these $B$ packets with value $\alpha$ will be sent by ON in steps $t, \ t + 1, \ \ldots, \ t + B - 1$.
\label{remark:2}
\end{remark}

\begin{remark}
From Algorithm~\ref{alg:preempt}, if ON preempts some $1$-value packets in a step $t$, ON will send all the preempting $\alpha$-packets in the following time steps. These preempting packets have a total value of at least $\beta$ times of those preempted $1$-value packets.
\label{remark:3}
\end{remark}

In the following, we introduce our charging scheme. Because it is difficult to compare ON with OPT directly, we compare ON with a relaxed algorithm called ROPT. We will show that ROPT gains the same total value as OPT does. In our charging scheme, we will charge values to ROPT and ON. Algorithm ROPT's operations at a time step $t$ is outlined in Algorithm~\ref{alg:ropt}.

\begin{algorithm}
\caption{\textsc{Relaxed OPT} ($p, \ t$)}
\begin{algorithmic}[1]
\STATE Accept each $\mathcal O$-packet arriving at step $t$.

\COMMENT{In Lemma~\ref{lemma:1}, we prove that all the $\mathcal O$-packets can be admitted by ROPT without encountering overflow.}

\COMMENT{Let $p$ be the packet that ON sends in $t$. If ON sends nothing in $t$, $p$ is defined as a non-$\mathcal O$ \emph{null packet} with value $0$.}

\IF{$p \in \mathcal O$ and $p$ is in ROPT's buffer}

\STATE send $p$;

\ELSE

\STATE send the earliest-released packet in the buffer, if any.

\ENDIF
\end{algorithmic}
\label{alg:ropt}
\end{algorithm}

\begin{lemma}
All $\mathcal O$-packets are accepted by ROPT.
\label{lemma:1}
\end{lemma}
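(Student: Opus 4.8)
The plan is to show that ROPT's buffer occupancy never exceeds that of OPT, which is feasible and hence bounded by $B$ at all times. Since ROPT accepts exactly the packets in $\mathcal{O}$, the same packets OPT accepts and arriving at the same times, the two algorithms differ only in the order in which packets leave the buffer. If I can prove that ROPT has delivered at least as many packets as OPT by the end of every step, then ROPT holds no more packets than OPT at every step, and the claimed no-overflow follows immediately.

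First I would record two structural facts. ROPT is \emph{work-conserving}: from Algorithm~\ref{alg:ropt}, whenever its buffer is non-empty it sends exactly one packet (either the designated $p$, or, failing that, the earliest-released packet), and it idles only when the buffer is empty. Second, OPT, being a feasible algorithm that accepts $\mathcal{O}$, keeps at most $B$ packets in its buffer at all times, in particular right after the acceptance phase of each step. These two facts are all I need about the sending behavior.

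The core is an induction on the time step $t$ that maintains two invariants \emph{simultaneously}: (i) ROPT has accepted every $\mathcal{O}$-packet released up to step $t$ without overflow, and (ii) the number of packets ROPT has sent through step $t$ is at least the number OPT has sent, $S^{\mathrm{ROPT}}_t \ge S^{\mathrm{OPT}}_t$. Let $A_t$ denote the number of $\mathcal{O}$-packets released in steps $1,\dots,t$. Under invariant (i), ROPT's occupancy immediately after the acceptance phase of step $t$ equals $A_t - S^{\mathrm{ROPT}}_{t-1}$; by the inductive inequality (ii) this is at most $A_t - S^{\mathrm{OPT}}_{t-1}$, which is OPT's own post-acceptance occupancy and hence at most $B$, so acceptance at step $t$ causes no overflow, re-establishing (i). For (ii), I would argue case by case at step $t$: if ROPT sends, its count rises by one and the inequality is preserved; if ROPT idles, its buffer is empty, so $S^{\mathrm{ROPT}}_{t-1} = A_t$, while any packet OPT sends requires $S^{\mathrm{OPT}}_{t-1} < A_t$, whence $S^{\mathrm{OPT}}_t = S^{\mathrm{OPT}}_{t-1}+1 \le A_t = S^{\mathrm{ROPT}}_t$.

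I expect the main obstacle to be the apparent circularity between the no-overflow claim (i) and the occupancy formula $A_t - S^{\mathrm{ROPT}}_{t-1}$, which presupposes that ROPT never dropped a packet; this is precisely why (i) and (ii) must be carried through the same induction rather than proved in sequence. A secondary point requiring care is the intra-step ordering, acceptance preceding delivery, so that the quantity to bound is the post-acceptance, pre-delivery occupancy, which is exactly the peak value that OPT's feasibility controls. Notably, the argument never requires OPT to be work-conserving: OPT idling only decreases $S^{\mathrm{OPT}}_t$ and makes invariant (ii) easier to maintain.
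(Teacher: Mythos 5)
Your proof is correct and takes essentially the same route as the paper: an induction over time steps showing that ROPT's buffer occupancy never exceeds OPT's (you phrase the invariant equivalently via cumulative send counts, $S^{\mathrm{ROPT}}_t \ge S^{\mathrm{OPT}}_t$), from which no-overflow follows since OPT's occupancy is bounded by $B$. Your version is merely more explicit about the idling case and the intra-step ordering than the paper's terse argument.
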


\begin{proof}
To prove Lemma~\ref{lemma:1}, we only need to show that at any time ROPT's buffer contains no more pending packets than OPT's buffer, and thus packet overflow does not happen to ROPT when admitting $\mathcal O$-packets. We apply the induction method. Initially, ROPT's and OPT's buffers are empty. Assume at time $t$, the number of packets in ROPT's buffer is no more than the number of packets in OPT's buffer. In step $t$, ROPT sends one packet, if any, and OPT sends one packet, if any. Then after packet delivery, ROPT's buffer still contains no more packets than OPT's buffer.
\end{proof}

\begin{corollary}
ROPT sends all the O-packets.
\label{coro:1+}
\end{corollary}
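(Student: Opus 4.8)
The plan is to derive Corollary~\ref{coro:1+} almost directly from Lemma~\ref{lemma:1}. The observation driving the proof is that ROPT has no eviction rule at all: Algorithm~\ref{alg:ropt} only ever \emph{accepts} $\mathcal{O}$-packets, and Lemma~\ref{lemma:1} guarantees that these acceptances never trigger overflow. Consequently a packet can leave ROPT's buffer for exactly one reason---being sent---and never because it is dropped. So it suffices to argue that every $\mathcal{O}$-packet that enters ROPT's buffer is eventually transmitted.

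First I would verify that ROPT transmits a packet in every time step in which its buffer is non-empty. Inspecting the two branches of Algorithm~\ref{alg:ropt}: in the first branch the sent packet $p$ is by hypothesis present in the buffer, so a transmission occurs; in the second branch ROPT sends the earliest-released buffered packet, which exists whenever the buffer is non-empty. Hence ROPT is never idle while holding a packet. Since the input is finite, there are finitely many $\mathcal{O}$-packets; after the final arrival the buffer receives no new packets and loses exactly one packet per step, so it drains to empty in a bounded number of steps, having transmitted everything it held. Combining this with Lemma~\ref{lemma:1}, which says every $\mathcal{O}$-packet is admitted, yields that ROPT sends every $\mathcal{O}$-packet.

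I do not expect a serious obstacle here; the only points requiring care are the two I isolated above---that ROPT performs no eviction (so \emph{accepted} upgrades to \emph{eventually sent}) and that ROPT never wastes a step while its buffer is occupied. One apparent worry is that the first branch of Algorithm~\ref{alg:ropt} may send the ON-packet $p$ even when $p$ is not the earliest-released packet in ROPT's buffer, so that ROPT transmits out of arrival order. But because packets in this model carry no deadlines, sending out of FIFO order only reshuffles which packet departs in a given step and can never render a remaining packet unsendable; every buffered packet still leaves on some later step. Thus the relaxation ROPT enjoys over a genuine FIFO policy costs nothing for this corollary, and the conclusion follows---which in turn establishes that ROPT matches OPT's total value $\sum_{j \in \mathcal{O}} v_j$.
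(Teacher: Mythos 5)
Your proposal is correct and takes essentially the same route as the paper, which simply notes that the corollary "easily holds" from Lemma~\ref{lemma:1} together with the fact that OPT sends all $\mathcal{O}$-packets; you merely make explicit the two elementary facts the paper leaves implicit (ROPT never evicts, and ROPT never idles with a non-empty buffer, so an admitted packet is eventually sent). No gap.
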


Given Lemma~\ref{lemma:1} and the fact that OPT successfully sends all the $\mathcal O$-packets, Corollary~\ref{coro:1+} easily holds. Lemma~\ref{lemma:1} and Corollary~\ref{coro:1+} guarantee that
\begin{remark}
In our charging scheme design, we only need to compare ON to ROPT instead of to OPT.
\end{remark}

We describe an important observation of ROPT in Remark~\ref{remark:0}.

\begin{remark}
For any $\mathcal O$-packet $p$ that is sent by ON in step $t$, ROPT either has sent $p$ before $t$ or sends $p$ in the same step $t$.
\label{remark:0}
\end{remark}

\begin{definition}[Chain of Steps]
For a chain consisting of $k$ time steps $$c_1 \rightarrow c_2 \rightarrow \cdots \rightarrow c_k,$$where $c_1 < c_2 < \cdots < c_k$, ON sends a non-$\mathcal O$-packet in step $c_1$ and for all $i = 1, \ \ldots, \ k - 1$, the packet that ROPT sends in step $c_i$ is the packet that ON sends in step $c_{i + 1}$. Note that these time steps do not need to be successive. Chains do not share time steps.
\end{definition}

\begin{lemma}
At any time, for any $\mathcal O$-packet in ON's buffer but not in ROPT's buffer, there is a unique corresponding chain of steps.
\label{lemma:1+}
\end{lemma}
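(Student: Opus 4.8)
The plan is to establish both existence and uniqueness by tracing backward in time from the given packet, exploiting the fact that both ON and ROPT send exactly one, fully determined packet per step. Fix a time and an $\mathcal{O}$-packet $p$ lying in ON's buffer but not in ROPT's buffer. Since $p \in \mathcal{O}$, Lemma~\ref{lemma:1} guarantees ROPT admitted $p$, and because ROPT encounters no overflow and performs no preemption (it only accepts $\mathcal{O}$-packets and eventually sends each of them, by Corollary~\ref{coro:1+}), the absence of $p$ from $Q^\text{ROPT}$ forces ROPT to have already delivered $p$ at a unique earlier step, which I take to be $c_k$, the last step of the chain. I will reconstruct $c_{k-1}, \ldots, c_1$ inductively.

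The key local step is to ask why ROPT sent $p$ (its earliest-released packet) at $c_k$ rather than mimicking ON. By Algorithm~\ref{alg:ropt}, ROPT sends its earliest packet precisely when the packet $p_k$ that ON delivers at $c_k$ is either a non-$\mathcal{O}$-packet or an $\mathcal{O}$-packet not in ROPT's buffer at $c_k$. In the first case I set $c_1 := c_k$ and stop: ON sends a non-$\mathcal{O}$-packet at $c_1$ and $p$ is ROPT's delivery there, so the length-one chain satisfies the definition. In the second case, $p_k \in \mathcal{O}$ and $p_k \notin Q^\text{ROPT}$ at $c_k$; by Remark~\ref{remark:0} together with the same never-drop reasoning, ROPT must have sent $p_k$ at a unique step $c_{k-1} < c_k$. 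Then $p_k$ is simultaneously the packet ROPT sends at $c_{k-1}$ and the packet ON sends at $c_k$, which is exactly the chain link $c_{k-1} \rightarrow c_k$, and I repeat the argument with $(p_k, c_{k-1})$ in place of $(p, c_k)$.

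This recursion produces strictly decreasing step indices $c_k > c_{k-1} > \cdots$, bounded below, so it terminates, and it can only terminate via the first case above, i.e.\ at a step $c_1$ where ON sends a non-$\mathcal{O}$-packet. The result is a chain $c_1 \rightarrow \cdots \rightarrow c_k$ meeting every clause of the definition. Uniqueness is immediate because each choice is forced: ROPT and ON each send one determined packet per step, so the predecessor $c_{i-1}$ (the unique step at which ROPT sent the packet ON delivers at $c_i$) and hence the entire chain is determined by $p$. The same determinism yields disjointness of chains for distinct packets: two chains sharing a step $c$ would force ROPT to have sent the same packet in both at $c$, and forward determinism of ON and ROPT would then make the chains coincide, contradicting that their head packets differ.

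The step I expect to be the main obstacle is justifying termination and the base case rigorously. I must rule out the trace stalling at an $\mathcal{O}$-packet that ROPT neither holds nor has ever sent, which is excluded by Lemma~\ref{lemma:1} and Remark~\ref{remark:0}, and I must rule out looping, which is excluded by strict monotonicity of the $c_i$. Care is also needed to confirm that each intermediate packet $p_i$ is genuinely in ON's buffer but not ROPT's at the relevant step, so that the ``in ON's buffer but not in ROPT's buffer'' status stays consistent all along the chain and each extracted link is legitimate.
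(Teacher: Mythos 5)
Your proposal is correct and follows essentially the same route as the paper: the paper's proof is exactly this backward trace (its Algorithm~\ref{alg:chain}), starting from the step where ROPT sent $p$ and repeatedly passing to the step where ROPT sent whatever ON delivered there, until ON delivers a non-$\mathcal O$-packet. Your added attention to termination, forced uniqueness, and disjointness of chains only makes explicit what the paper leaves implicit.
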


\begin{proof}
In Algorithm~\ref{alg:chain}, we introduce how to build up a chain of steps for each $\mathcal O$-packet $p$ in ON's buffer but not in ROPT's buffer at time $t$. This construction directly proves Lemma~\ref{lemma:1+}. We use $time(p)$ to denote the time step in which ROPT sends a packet $p$.

\begin{algorithm}
\caption{\textsc{Construction of a Chain of Steps ($t$)}}
\begin{algorithmic}[1]
\STATE From Remark~\ref{remark:0}, there exists a unique previous time step $time(p) < t$ in which ROPT sends $p$ and ON sends another packet $q \neq p$.

\IF{$q \notin \mathcal O$}

\STATE create a chain of steps consisting of only one time step $time(p)$.

\ELSE[that is, $q \in \mathcal O$]

\STATE construct a chain of steps $time(q) \rightarrow time(p)$;

\COMMENT{From Remark~\ref{remark:0}, ROPT must send $q$ in a unique time step $time(q) < time(p)$.}

\WHILE{the packet $q'$ that is sent by ON in $time(q)$ is an $\mathcal O$-packet}

\STATE expand the chain by inserting $time(q')$ to the front of the current chain;

\STATE $q$ is replace by $q'$ (for ease of notation of \textbf{while} loop);

\ENDWHILE

\STATE expand the chain by inserting $time(q')$ to the front of the current chain and this chain is completed.

\COMMENT{We have found a non-$\mathcal O$-packet sent by ON and thus the chain is completed, as the head of the chain has to be a non-$\mathcal O$-packet.}

\ENDIF
\end{algorithmic}
\label{alg:chain}
\end{algorithm}
\end{proof}

The basic idea is to start building the chain from the end to the head, in the reverse order of time. The end step is a step that an $\mathcal O$-packet is in ON's buffer but not in ROPT's buffer. Starting from this step, we search backwards in time to look for the step in which ROPT sends this $\mathcal O$-packet. From Remark~\ref{remark:0}, we know that such a step must proceed the end step. Then we look at the packet sent by ON in this step, if it is a non-$\mathcal O$-packet, we stop constructing the chain because we have found the head step of the chain. If it is an $\mathcal O$-packet, we expand the chain and continue to search backwards until we find a non-$\mathcal O$-packet sent by ON.

A characteristics of a chain is that in each time step in the chain except for the first step, ON sends an $\mathcal O$-packet.

\begin{figure*}[htp!]
\includegraphics[width=.74\textwidth]{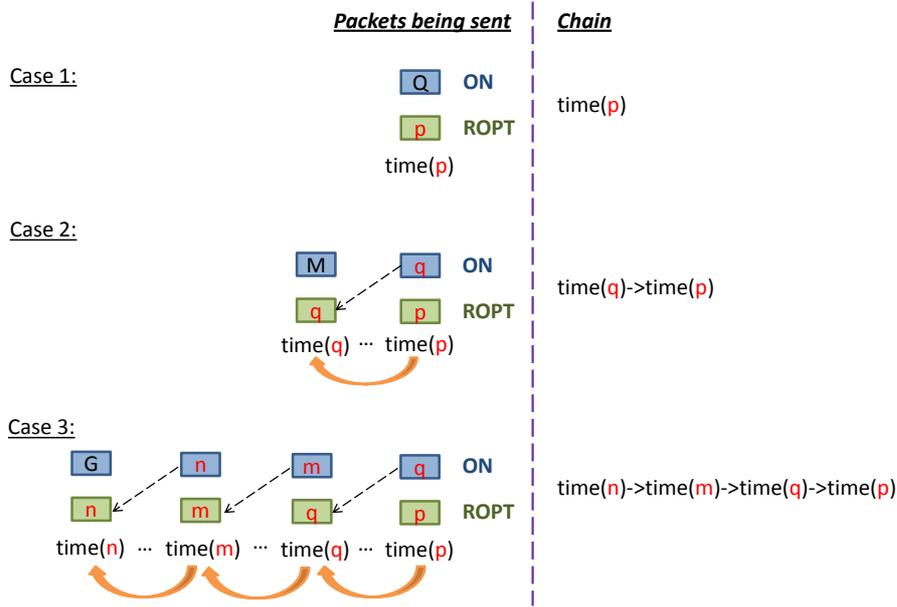}
\caption{The construction of chains of steps}
\label{fig:1}
\end{figure*}

Figure~\ref{fig:1} demonstrates the construction of the chain of steps. Three cases are given. The packets selected by ON and ROPT to send in each step are shown. Capital letter packet is a non-$\mathcal O$-packet, and small letter packet is an $\mathcal O$-packet. The corresponding chains are plotted on the right.

In the following, we introduce our charging scheme for both ON and ROPT. The charging scheme will use the procedure of constructing chains. For those $\mathcal O$-packets that are preempted or evicted by ON, we construct the chains at the time steps when they are preempted or evicted. Then, we charge these $\mathcal O$-packets' values to ROPT either at the first time steps of the chains or in the time steps after they are preempted or evicted. Details of the charging scheme are described as follows.

\begin{definition}
\textbf{Open/closed chain.} Given a time $t \ge c_k$, consider a chain of $k$ time steps $c_1 \rightarrow c_2 \rightarrow \cdots \rightarrow c_k$. We call this chain \emph{closed} if we have charged the value of a packet in ROPT's current buffer to ROPT in step $c_1$. (Note that in step $c_1$, ON sends a non-$\mathcal O$-packet; from Corollary~\ref{coro:1}, this packet is a $1$-value packet.) Otherwise, we say that this chain is \emph{open}.
\end{definition}


\subsubsection*{Case $1$.}

For each packet sent by ON, we charge ON the value of this packet in the time step that it is being sent.


\subsubsection*{Case $2$.}

For those $\mathcal O$-packets that are sent by both ROPT and ON, we charge their values to ROPT in the time steps that ON sends them.


\subsubsection*{Case $3$.}

For the $\mathcal O$-packets that are not sent by ON, they are either evicted or preempted.


\paragraph{Assume the unsent packet is an $\alpha$-value packet}

Those $\alpha$-value $\mathcal O$-packets that are not delivered by ON can only be evicted (from Remark~\ref{remark:1}).

Let $p$ denote an evicted $\alpha$-value $\mathcal O$-packet and $d_p$ denote the time in which ON evicts $p$. From Remark~\ref{remark:2}, ON sends at least $B$ packets with value $\alpha$ after time $d_p$. We define $l_p$ as the first time step after $d_p$ such that in step $l_p$ ON sends either a $1$-value packet or nothing. We charge the value $v_p = \alpha$ to ROPT in the time interval $[d_p, \ l_p - 1]$.


\paragraph{Assume the unsent packet is a $1$-value packet}

Those $1$-value $\mathcal O$-packets that are not delivered by ON can be evicted or preempted by ON.

\begin{enumerate}
\item Assume $p$ is a preempted $1$-value $\mathcal O$-packet and $d_p$ is the time in which ON preempts $p$.

\begin{enumerate}
\item Assume there is an open chain associated with a preempting packet at time $d_p$.

Let $q$ be the earliest-released $\alpha$-value preempting packet in ON's buffer whose corresponding chain is open. From Algorithm~\ref{alg:chain}, $q$ is an $\mathcal O$-packet that has been sent by ROPT in a previous time step $< d_p$. In the first step of this chain, say $t'$, ON sends a $1$-value non-$\mathcal O$-packet (see Corollary~\ref{coro:1}).

We charge the value $v_p = 1$ to ROPT in step $t'$ and we close this chain.

\item Assume there is no open chain associated with any preempting packet at time $d_p$.

Let $h \ge 1$ be the number of preempting packets in ON's buffer at time $d_p$. From Remark~\ref{remark:3}, ON sends these preempting $\alpha$-packets consecutively from $d_p$ to $d_p + h - 1$.

We charge the value $v_p = 1$ to ROPT in each of the time step in the interval $[d_p, \ d_p + h - 1]$.
\end{enumerate}

\item Assume $p$ is an evicted $1$-value $\mathcal O$-packet and $d_p$ is the time in which ON evicts $p$.

\begin{enumerate}
\item Assume $p$ is an $\mathcal O$-packet that has been sent by ROPT by time step $d_p$.

From Lemma~\ref{lemma:1+}, $p$ corresponds to a chain of steps such that $p$ is the packet sent by ROPT in the last time step of the chain and in the first time step of the chain, say, $t'$, ON sends a $1$-value non-$\mathcal O$-packet (see Corollary~\ref{coro:1}). Because $v_p = 1$, $p$ is not a preempting packet and no preempted $1$-value packet has been charged in the first step $t'$ of the chain for ROPT.

We charge the value $v_p = 1$ to ROPT in step $t'$ and we close this chain.

\item Assume $p$ is rejected by ON at its arrival. ROPT accepts $p$ and will send $p$ in a later time step $\ge d_p$.

From Algorithm~\ref{alg:admit}, if this case happens, it must be true that ON's buffer is full of $\alpha$-value packets at $p$'s arrival.

We first claim that all the packets in ON's buffer at time $d_p$ are $\mathcal O$-packets. Because otherwise, ROPT can use an $\alpha$-value non-$\mathcal O$-packet which is only in ON's buffer to replace $p$ ($v_p = 1$) in ROPT's buffer to gain more value. We then claim that there must exist at least one open chain at time $d_p$ since otherwise, each closed chain corresponds to one packet in ROPT's buffer which forbids ROPT to accept $p$.

Let $q$ be the earliest-released $\alpha$-value $\mathcal O$-packet in ON's buffer whose corresponding chain is open. In the first step of this chain, say $t'$, ON sends a $1$-value non-$\mathcal O$-packet (see Corollary~\ref{coro:1}).

We charge the value $v_p = 1$ to ROPT in step $t'$ and we close this chain.
\end{enumerate}
\end{enumerate}

\begin{remark}
For any evicted $1$-value $\mathcal O$-packet, its value is charged to ROPT in a time step $t'$ in which ON sends a $1$-value non-$\mathcal O$-packet and $t'$ is the first time step of a closed chain of steps.
\label{remark:4}
\end{remark}

\begin{remark}
For each preempted $1$-value $\mathcal O$-packet, if its value is charged to ROPT in a time step $t'$ in which ON sends a $1$-value non-$\mathcal O$-packet and $t'$ is the first time step of a closed chain of steps, the gain ratio in this time step $t'$ is bounded by $1$.
\label{remark:4+}
\end{remark}

Remark~\ref{remark:4} and Remark~\ref{remark:4+} indicate that in the time steps that we charge $1$-value evicted/preempted packets to ROPT, the gain ratio is bounded by $1$. In the time step when ON sends an $\mathcal O$-packet, the value of the $\mathcal O$-packet is charged to ROPT in the same time step and the gain ratio is $1$. Thus, in order to prove Theorem~\ref{theorem:1}, we only need to analyze the gain ratio for the evicted $\alpha$-values $\mathcal O$-packets and the preempted $1$-value $\mathcal O$-packets. (Recall that Remark~\ref{remark:1} shows that no $\alpha$-value packet is evicted by ON.)

\begin{remark}
Each evicted (respectively, preempted) $\mathcal O$-packet $p$ is associated with a time interval $[d_p, \ l_p - 1]$ (respectively $[d_p, \ d_p + k - 1]$). In the time steps falling in these intervals, ON sends $\alpha$-value packets only.
\label{remark:5}
\end{remark}

To avoid double-charging the $\mathcal O$-packets unsent by ON, we have the following results.

\begin{lemma}
Consider an interval in which ON sends preempting $\alpha$-value packets. If there are preempted packets that are charged to ROPT in this interval, then no evicted $\alpha$-value packets are charged to ROPT in this interval.
\label{lemma:1++}
\end{lemma}

\begin{proof}
Note that if a preempted $1$-value $\mathcal O$-packet $p$ is charged to ROPT in this interval, then at time $d_p$ by when ON preempts $p$, there are no open chains. From time $d_p$ to the time when ON sends all the preempting packets, no new chains are generated and no open chains exist. Also, for each closed chain, if its last packet is in ON's buffer, the first time step of this chain corresponds to a unique packet in ROPT's buffer. Hence, no $\alpha$-value packet will be evicted by ON during this interval.
\end{proof}

\begin{corollary}
Consider an interval in which ON sends $\alpha$-value packets. If there are evicted $\alpha$-value packets charged to ROPT in this interval, then no preempted $1$-value packets are charged to ROPT in this interval.
\label{coro:2}
\end{corollary}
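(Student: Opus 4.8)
The plan is to obtain Corollary~\ref{coro:2} as the contrapositive of Lemma~\ref{lemma:1++}, so that the only substantive work is to reconcile the two descriptions of the time interval involved. Lemma~\ref{lemma:1++} asserts that inside an interval carrying a run of preempting $\alpha$-value packets, charging a preempted packet precludes charging any evicted $\alpha$-value packet; this is logically equivalent to the statement that, in such an interval, charging an evicted $\alpha$-value packet precludes charging any preempted packet. Since every preempted packet is a $1$-value packet (Remark~\ref{remark:1}), this equivalent form is precisely the assertion of the corollary, provided the ``interval in which ON sends $\alpha$-value packets'' of the corollary can be identified with the ``interval in which ON sends preempting $\alpha$-value packets'' of the lemma.

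First I would isolate which charging events can actually land on $\alpha$-transmission steps. A preempted $1$-value $\mathcal{O}$-packet is charged to ROPT either at the first step $t'$ of an open chain (subcase~1a), where ON sends a $1$-value non-$\mathcal{O}$-packet, or across the window $[d_p, d_p + h - 1]$ (subcase~1b), where ON transmits the $h$ preempting $\alpha$-packets consecutively (Remark~\ref{remark:3}). The step $t'$ of subcase~1a is a $1$-value transmission step and hence never coincides with an $\alpha$-transmission step, so it is irrelevant here; the only preemption charging that can fall inside an $\alpha$-interval is therefore of subcase~1b type, and such charging by construction occupies a run of preempting $\alpha$-packets. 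By Remark~\ref{remark:5}, the charging of an evicted $\alpha$-value packet likewise occupies a run of $\alpha$-transmissions, namely $[d_p, l_p - 1]$.

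Given these observations the logical core is a short contradiction. Suppose an interval $I$ of consecutive $\alpha$-transmissions has an evicted $\alpha$-value packet charged to ROPT, and suppose, toward a contradiction, that a preempted $1$-value packet is also charged in $I$. By the previous paragraph this preemption charging is of subcase~1b type, so the relevant steps of $I$ form an interval in which ON sends preempting $\alpha$-value packets, i.e.\ an interval to which Lemma~\ref{lemma:1++} applies. Invoking Lemma~\ref{lemma:1++} on this interval, the presence of a charged preempted packet forbids any evicted $\alpha$-value packet from being charged there, contradicting the hypothesis. Hence no preempted $1$-value packet is charged in $I$, which is the claim.

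The step I expect to be the genuine obstacle is the interval identification assumed above. The conclusion of Lemma~\ref{lemma:1++} is about what is \emph{charged} in a preemption run, but its underlying proof controls what is \emph{evicted} during that run, and a charging window $[d_p, l_p - 1]$ is anchored at the eviction time $d_p$ while extending forward to the next $1$-value or idle step; a priori such a window could begin before a preemption run and spill into it. To make the contradiction airtight I would show that whenever an eviction charging window overlaps a subcase~1b preemption window, the two are attributed to the same maximal block of $\alpha$-transmissions, so that ``this interval'' denotes the same set of steps in both statements. I expect this to follow from the subcase~1b precondition that no chains are open at the preemption time together with the all-$\alpha$ buffer composition established in the proof of Lemma~\ref{lemma:1++}, which together rule out an eviction charging merely leaking into the preemption run from outside. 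Once this identification is secured, the corollary is immediate from Lemma~\ref{lemma:1++} by contraposition.
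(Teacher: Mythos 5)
Your proposal is correct and takes essentially the same route as the paper, which states this corollary without any proof at all, treating it as the immediate contrapositive of Lemma~\ref{lemma:1++}. The interval-identification subtlety you flag (that an eviction charging window $[d_p, l_p - 1]$ could in principle begin before a preemption run and spill into it, so the lemma's ``interval'' and the corollary's ``interval'' must be reconciled) is a genuine point that the paper silently glosses over, and your sketch for closing it via the no-open-chains precondition of subcase 1b is the right one.
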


Given Lemma~\ref{lemma:1++} and Corollary~\ref{coro:2}, to prove Theorem~\ref{theorem:1}, we will show that
\begin{enumerate}
\item in each interval ON sends preempting $\alpha$-value packets, the total value of the preempted $1$-value packets assigned to ROPT is bounded by $\frac{1}{\beta}$ times of the value gained by ON; and

\item in each interval ON sends $\alpha$-value packets, if $x$ evicted $\alpha$-value $\mathcal O$-packets are charged to ROPT in this interval, then that there are $x$ open chains corresponding to these $\alpha$-value packets. Also, there are $x$ time steps in which ON sends non-$\mathcal O$-packets and no values are charged to ROPT in those time steps. Also, $x$ is always bounded by $\frac{\beta}{\alpha + \beta}$.
\end{enumerate}

In the following, we consider the gain ratio for an interval with evicted $\alpha$-value $\mathcal O$-packets.

\begin{lemma}
At any time, the number of $\mathcal O$-packets which are in ON's buffer but have been sent by ROPT is no more than $\frac{B \cdot \beta}{\alpha + \beta}$.
\label{lemma:2}
\end{lemma}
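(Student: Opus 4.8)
The plan is to bound the number of $\mathcal{O}$-packets that reside in ON's buffer yet have already been sent by ROPT, and the natural invariant to track is the correspondence between such packets and their chains of steps. By Lemma~\ref{lemma:1+}, every $\mathcal{O}$-packet that is in ON's buffer but not in ROPT's buffer is associated with a unique chain whose head step $c_1$ is a step in which ON sends a $1$-value non-$\mathcal{O}$-packet (by Corollary~\ref{coro:1}). First I would argue that these chains are \emph{disjoint} in their head steps, so counting such $\mathcal{O}$-packets reduces to counting the distinct head steps, i.e.\ the distinct time steps in which ON sent a $1$-value non-$\mathcal{O}$-packet that ``seeds'' a still-living chain.

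Next I would connect this count to the preemption condition of Algorithm~\ref{alg:preempt}. The key structural fact is that an $\mathcal{O}$-packet of value $\alpha$ can sit in ON's buffer while already sent by ROPT only if it is one of the $\alpha$-value packets that ON is holding; and the number of such held $\alpha$-value packets is capped by $B$ (the buffer capacity). The factor $\frac{\beta}{\alpha+\beta}$ must come from the arithmetic of the preemption inequality $\sum_{q}v_q \ge \beta\sum_{p\in D}v_p$: each $1$-value non-$\mathcal{O}$-packet sent at a chain head is a packet ON chose \emph{not} to preempt, which means at that moment the total $\alpha$-value mass was too small relative to $\beta$ times the ejectable $1$-value mass. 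I would make this precise by setting up a counting/accounting argument: if $N$ denotes the number of open/live chains (equivalently the target count), then the $N$ head steps each witness an un-preempted $1$-value packet, and the $\alpha$-packets occupying ON's buffer provide the ``budget'' that the preemption threshold constrains. Solving the resulting inequality for $N$ in terms of $B$, $\alpha$, and $\beta$ should yield $N \le \frac{B\cdot\beta}{\alpha+\beta}$.

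I would organize the proof as an induction over time steps, maintaining the invariant that at every time $t$ the quantity in question is at most $\frac{B\beta}{\alpha+\beta}$. The inductive step must check each event type: an arrival handled by Algorithm~\ref{alg:admit} (which can only evict, not create new such $\mathcal{O}$-packets), ON sending a packet, ROPT sending a packet (which is the operation that can turn an $\mathcal{O}$-packet in ON's buffer into one ``already sent by ROPT,'' thereby incrementing the count and extending a chain), and a preemption event (which, by Lemma~\ref{lemma:1++} and its surrounding discussion, closes chains and thus decrements the count). The crux is to show that the only way the count grows is guarded by the failure of the preemption inequality, so that growth beyond $\frac{B\beta}{\alpha+\beta}$ would force $\sum_q v_q$ to exceed $B\alpha$, contradicting the capacity bound $|Q^{\text{ON}}| \le B$.

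The hard part will be making the charge between the count of live chains and the preemption threshold fully rigorous, because the $\alpha$-value budget and the set $D$ of ejectable packets both change over time, and a chain's head step may lie far in the past relative to the moment its tail $\mathcal{O}$-packet is absorbed by ROPT. I expect the main obstacle to be proving that, at the instant the count would reach $\frac{B\beta}{\alpha+\beta}+1$, ON's buffer necessarily contains enough $\alpha$-value mass (more than $B\alpha$) to violate the capacity constraint, or equivalently that the preemption rule would already have fired at an earlier head step and closed one of the chains. Carefully tying each live chain to a distinct $\alpha$-value packet currently held by ON — and verifying that these $\alpha$-value packets are precisely the ones whose aggregate value the threshold inequality controls — is where the essential work lies; once that bijection-style bound is established, the arithmetic isolating $\frac{B\beta}{\alpha+\beta}$ is routine.
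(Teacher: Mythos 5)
You have assembled the right ingredients---the count can only grow at a step where ON declines to preempt, and the preemption threshold of Algorithm~\ref{alg:preempt} together with the capacity $B$ must produce the factor $\frac{\beta}{\alpha+\beta}$---but the one place where you commit to concrete arithmetic is wrong, and it is exactly the step that constitutes the proof. Reaching a count of $\frac{B\beta}{\alpha+\beta}+1$ does \emph{not} force the $\alpha$-value mass in ON's buffer above $B\alpha$: that many $\alpha$-packets carry total value $\frac{B\alpha\beta}{\alpha+\beta}+\alpha$, and since $\frac{\beta}{\alpha+\beta}<1$ they fit comfortably in a buffer of size $B$ once $B\ge 1+\beta/\alpha$, so there is no capacity contradiction; this claim is also not ``equivalent'' to the statement that the preemption rule would already have fired. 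The paper's argument is the second alternative you mention but never carry out, and it is a one-line computation rather than the elaborate induction you sketch: consider a step at which the count stands at $x$ and ON sends a $1$-value packet without preempting. All $x$ counted packets are $\alpha$-value packets simultaneously present in ON's buffer at that step, so $\sum_{q\in Q^{\text{ON}},\,v_q=\alpha} v_q\ge x\alpha$; the failure of the preemption test gives $\sum_{q\in Q^{\text{ON}},\,v_q=\alpha} v_q<\beta\sum_{p\in D}v_p$; and since the buffer holds at most $B$ packets of which $x$ have value $\alpha$, $\sum_{p\in D}v_p\le B-x$. Chaining these yields $x\alpha<\beta(B-x)$, i.e.\ $x<\frac{B\beta}{\alpha+\beta}$. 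You explicitly defer this ``bijection-style bound'' as the unfinished essential work, so the proposal as written has a genuine gap precisely where the lemma's content lies.

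A secondary problem is that routing the count through the chains of Lemma~\ref{lemma:1+} creates the difficulty you yourself flag: a chain's head step may lie far in the past, where the buffer contents bear no relation to the current $\alpha$-value mass, so summing the failed preemption tests over the $N$ historical head steps does not obviously control $N$. The paper avoids the chains entirely for this lemma by evaluating the preemption test at the increment step itself, where the $x$ counted packets coexist in ON's buffer and the lower bound $x\alpha$ on the $\alpha$-mass is immediate. If you keep the chain formulation you would still have to reduce to that single-step observation, so the extra machinery buys nothing here.
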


\begin{proof}
In ON's buffer, the cumulative number of $\alpha$-value packets that have been sent by ROPT is increased by $1$ only when ON sends a $1$-value $\mathcal O$-packet. That means no preemption happens in that time step; otherwise, that $1$-value $\mathcal O$-packet will be preempted and an $\alpha$-value preempting packet will be sent.  For each time step that ON sends a $1$-value non-$\mathcal O$-packet, we have the following inequality (let $x$ be the cumulative number of evicted $\alpha$-value $\mathcal O$-packets): $\alpha < (B - 1) \beta$, $2 \alpha < (B - 2) \beta$, $\cdots$, $x \cdot \alpha < (B - x) \beta$. From $x \cdot \alpha < (B - x) \beta$, we have $x < \frac{B \cdot \beta}{\alpha + \beta}$.
\end{proof}

The above inequality limits the number $x$ of cumulative evicted $\alpha$-value $\mathcal O$-packets that we charge to ROPT in the time interval that ON sends at least $B$ packets with value $\alpha$. From Lemma~\ref{lemma:2}, $x < \frac{B \cdot \beta}{\alpha + \beta}$. For each evicted $\alpha$-value packet, it corresponds to a time step in which ON sends a $1$-value non-$\mathcal O$-packet and in that time step, we do not charge ROPT any value. Thus, in those $x$ time steps and this time interval, ROPT gains a total value of $(x + B') \alpha$ and ON gains a total value of $x + B' \cdot \alpha$, where $B' \ge B$. Then we have the ratio of gains bounded by (note $B' \ge B$ and $x < \frac{B \cdot \beta}{\alpha + \beta}$)
\begin{eqnarray*}
\frac{(x + B') \alpha}{x + B' \cdot \alpha} & \le & \frac{(x + B) \alpha}{x + B \cdot \alpha}\\
& \le & \frac{\left(\frac{B \cdot \beta}{\alpha + \beta} + B\right) \alpha}{\frac{B \cdot \beta}{\alpha + \beta} + B \cdot \alpha}\\
& = & \frac{\left(\frac{\beta}{\alpha + \beta} + 1\right) \alpha}{\frac{\beta}{\alpha + \beta} + \alpha}\\
& = & \frac{\alpha^2 + 2 \alpha \cdot \beta}{\alpha^2 + \alpha \cdot \beta + \beta}.
\end{eqnarray*}

For the interval in which ROPT is charged with preempted $1$-value packets, we know that there are no evicted $\alpha$-value packets are charged to ROPT in this interval (see Lemma~\ref{lemma:1++}). Thus, the total value of the preempted $1$-value $\mathcal O$-packets is bounded by $\frac{1}{\beta}$ times of the total value of the $\alpha$-value preempting packets. Note that these preempting $\alpha$-value packets may be $\mathcal O$-packets and we charge their values to ROPT in these time steps, thus, the gain ratio is bounded by $\frac{1 + \beta}{\beta}$.

We want to minimize the gain ratio $\rho$ for all the time intervals, where $\rho = \max\{\frac{1 + \beta}{\beta}, \ \frac{\alpha^2 + 2 \alpha \cdot \beta}{\alpha^2 + \alpha \cdot \beta + \beta}\}$. In order to get $\rho = \frac{1 + \beta}{\beta}$ for any $\alpha$, we have that for any $\alpha$, $\frac{1 + \beta}{\beta} \ge \frac{\alpha^2 + 2 \alpha \cdot \beta}{\alpha^2 + \alpha \cdot \beta + \beta}$. This requires
\begin{equation}
\alpha^2 - \beta (\beta - 1) \alpha + \beta^2 + \beta > 0.
\label{equ:final}
\end{equation}

To satisfy the inequality in Equation~\ref{equ:final} for any $\alpha$, we need to guarantee $\beta^2 (\beta - 1)^2 - 4 (\beta^2 + \beta) = \beta\left(\beta^3 - 2 \beta^2 - 3 \beta - 4\right) < 0$. By solving $\beta^3 - 2 \beta^2 - 3 \beta - 4 < 0$, we have $\beta \le 3.284$. Hence, we get the gain ratio $\rho$ minimized at $1.304$ when $\beta = 3.284$.

\begin{corollary}
ON is $1.304$-competitive when $\beta = 3.284$.
\end{corollary}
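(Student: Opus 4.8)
The plan is to derive the corollary directly from Theorem~\ref{theorem:1} by optimizing over the free parameter $\beta$. The bound there is the maximum of two quantities: the first, $\frac{1+\beta}{\beta} = 1 + \frac{1}{\beta}$, is independent of $\alpha$ and strictly decreasing in $\beta$; the second, $\frac{\alpha^2 + 2\alpha\beta}{\alpha^2 + \alpha\beta + \beta}$, depends on both $\alpha$ and $\beta$. Since the competitive ratio must hold for every admissible instance (every $\alpha > 1$), I would arrange for the $\alpha$-independent term $\frac{1+\beta}{\beta}$ to be the dominant one, and then push $\beta$ as large as the domination constraint allows, because enlarging $\beta$ is exactly what drives $1 + \frac{1}{\beta}$ down toward its infimum.

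First I would translate the domination requirement $\frac{1+\beta}{\beta} \ge \frac{\alpha^2 + 2\alpha\beta}{\alpha^2 + \alpha\beta + \beta}$ into a polynomial inequality. Both denominators are positive for $\alpha,\beta > 0$, so cross-multiplication is valid and, after cancellation, the requirement collapses to the quadratic-in-$\alpha$ condition recorded as Equation~\ref{equ:final}, namely $\alpha^2 - \beta(\beta-1)\alpha + \beta^2 + \beta > 0$. The key observation is to read the left-hand side as a quadratic in $\alpha$ with positive leading coefficient; such a quadratic is nonnegative for all $\alpha$ precisely when its discriminant is nonpositive, which converts a condition quantified over $\alpha$ into a condition on $\beta$ alone.

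Next I would compute the discriminant $\beta^2(\beta-1)^2 - 4(\beta^2+\beta)$, factor out the positive quantity $\beta$, and reduce feasibility to the single-variable cubic inequality $\beta^3 - 2\beta^2 - 3\beta - 4 \le 0$. Solving $\beta^3 - 2\beta^2 - 3\beta - 4 = 0$ numerically gives the threshold $\beta \approx 3.284$; for $\beta$ below it the discriminant is negative and the domination holds for every $\alpha$, so $\rho = \frac{1+\beta}{\beta}$. Taking the largest admissible value $\beta = 3.284$ (where the discriminant is exactly zero, i.e., the two ratios coincide at the worst-case $\alpha$ and the first dominates elsewhere) and substituting into $1 + \frac{1}{\beta}$ yields $\rho \approx 1.304$, the claimed bound.

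The main obstacle is bookkeeping rather than insight: carrying out the cross-multiplication and cancellation cleanly to land on Equation~\ref{equ:final}, and then treating the cubic. The two points I would take care to justify are (i) that the cross-multiplication direction is legitimate, which holds since both denominators are positive, and (ii) that requiring positivity for all real $\alpha$—rather than only for $\alpha > 1$—is a sound sufficient condition, which it is because the quadratic opens upward and the discriminant criterion certifies global nonnegativity. The numerical root of the cubic I would confirm by a simple sign check around $\beta = 3.284$.
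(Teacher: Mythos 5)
Your proposal is correct and follows essentially the same route as the paper: cross-multiplying the domination condition to obtain Equation~\ref{equ:final}, treating it as a quadratic in $\alpha$ whose discriminant must be nonpositive, factoring out $\beta$ to reach the cubic $\beta^3 - 2\beta^2 - 3\beta - 4 \le 0$, and taking the largest admissible $\beta \approx 3.284$ to get $1 + 1/\beta \approx 1.304$. Your added care about the direction of cross-multiplication and about certifying the inequality for all real $\alpha$ (not just $\alpha > 1$) is a slight tightening of the paper's presentation but not a different argument.
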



\section{Related Work and Open Problems}
\label{sec:related}

Mansour et al.~\cite{MansourPL04} initiated the study of competitive online algorithms for the FIFO buffering model. They designed a simple greedy deterministic algorithm with a tight competitive ratio $2$~\cite{KesselmanLMPSS04}. The first algorithm with a competitive ratio strictly less than $2$ was presented by Kesselman et al.~\cite{KesselmanMS05}. Englert and Westermann~\cite{EnglertW09} showed that PG is $1.732$-competitive but no better than $1.707$-competitive. The lower bound of competitive ratio for deterministic algorithms is $1.409$~\cite{KesselmanMS05}. For the two-valued variant in which packets have value either $1$ or $\alpha > 1$, Kesselman and Mansour~\cite{KesselmanM03} proposed a $1.544$-competitive memoryless algorithm. Englert and Westermann~\cite{EnglertW09} presented an optimal $1.282$-competitive algorithm which meets the lower bound~\cite{KesselmanLMPSS04}. However, this algorithm~\cite{EnglertW09} is not memoryless.

In this paper, we present a $1.304$-competitive memoryless algorithm for the two-valued variant. In~\cite{LotkerP03}, an algorithm using marking bits achieves the same competitive ratio $1.304$. The algorithm that we present in this paper is simpler and it is not using marking bits. All previous work proactively preempt packets. On the contrary, our algorithm drops packets in a `lazy' manner. For this variant, closing or shrinking the gaps of $[1.282, \ 1.304]$ for memoryless algorithms remains an open problem. We hope that the idea presented in this paper will motivate an improved algorithm for the general FIFO model with arbitrary values.



\bibliographystyle{elsarticle-num}
\bibliography{complete}


\end{document}